\newtheorem{thm}{Theorem}[section]
\newtheorem{cor}[thm]{Corollary}
\newtheorem{obs}[thm]{Observation}
\newtheorem{lem}[thm]{Lemma}
\newtheorem{defn}[thm]{Definition}
\begin{document}
\title{Approximate Discontinuous Trajectory Hotspots}
\author{Ali Gholami Rudi\thanks{
{Department of Electrical and Computer Engineering},
{Bobol Noshirvani University of Technology}, {Babol, Iran}.
Email: {\tt gholamirudi@nit.ac.ir}.}}
\date{}
\maketitle
\begin{abstract}
A hotspot is an axis-aligned square of fixed side length $s$,
the duration of the presence of an entity moving in the plane in which is maximised.
An exact hotspot of a polygonal trajectory with $n$ edges can be found in $O(n^2)$.
Defining a $c$-approximate hotspot as an axis-aligned square of
side length $cs$, in which the duration of the entity's presence
is no less than that of an exact hotspot,
in this paper we present an algorithm to find a $(1 + \epsilon)$-approximate
hotspot of a polygonal trajectory with the time complexity
$O({n\phi \over \epsilon} \log {n\phi \over \epsilon})$,
where $\phi$ is the ratio of average trajectory edge length to $s$.
\end{abstract}

\noindent
{\textbf{Keywords}: Computational geometry, geometric algorithms,
trajectory analysis, trajectory hotspots}.\\
\\
{\textbf{2010 Mathematics subject classification}: 68U05}.\\

\section{Introduction}
\label{sint}
Many objects on earth move and huge collections of trajectory data have
been collected by tracking some of them with technologies like GPS devices.
In the analysis of these trajectories, many interesting geometric problems
arise, such as simplification \cite{kreveld18}, segmentation \cite{aronov16},
grouping \cite{buchin15}, classification \cite{alewijnse18}, and finding the
interesting regions like where objects spend a significant amount
of time \cite{benkert10,gudmundsson13,damiani14,rudi18}.

Few results have been published to present exact
geometric algorithms for the identification of regions that
are frequently visited, called \emph{hotspots} in the rest of this paper
(several heuristic algorithms have been published though, such as \cite{damiani14}).
The movement of an entity (its trajectory) is commonly represented as
a polygonal curve.  A set of vertices show the location of the entity
at specific points in time, and line segments (as edges) connect
contiguous vertices.
For multiple entities,
Benkert et al.~\cite{benkert10} defined a hotspot as
an axis-aligned square, which is visited by the maximum
number of distinct entities.
They presented an $O(n \log n)$ time sweep-line algorithm,
where $n$ is the number of trajectory vertices,
when only the inclusion of a trajectory vertex is considered a visit.
For the case where the inclusion of any portion of a trajectory
edge is a visit, they presented an $O(n^2)$ algorithm, which
subdivides the plane to find a hotspots.
They showed in both cases that their algorithm is optimal.

In a more recent paper, Gudmundsson et al.~\cite{gudmundsson13}
examined different definitions of trajectory hotspots, in
which the duration of the entity's presence is significant.
In this paper, we focus on one of their definitions, as follows:
a hotspot is an axis-aligned square of some pre-specified side length,
in which the entity (or entities) spends the maximum possible
duration and the presence of the entity in the region can
be discontinuous.  For this problem and for a trajectory with
$n$ edges, they presented an exact $O(n^2)$ algorithm,
which subdivides the plane based on the breakpoints
of the function that maps the location of a square of the
specified side length to the duration of the presence of the
entities in that square.

When $s$ is the side length of exact hotspots, a $c$-approximate
hotspot, where $c > 1$, is an axis-aligned square of side length $cs$,
in which the duration of the entity's presence is no less than
that of an exact hotspot.
In this paper we present an algorithm to find $(1 + \epsilon)$-approximate
hotspots of a trajectory in the plane.
The algorithm first subdivides each edge to small segments and then
finds the square that contains the maximum number of such segments.
The time complexity of the algorithm, which shall be presented in
the rest of this paper, is $O({n\phi \over \epsilon} \log {n\phi \over \epsilon})$,
where $\phi$ is the ratio of average length of trajectory edges to $s$.
We then use this algorithm to find a duration-approximate
hotspot of a trajectory $T$ with approximation ratio $1/4$,
i.e.~a square of side length $s$, in which the entity is present
at least $1/4$ of the time it is present in the exact hotspot.

This paper is organized as follows.
The algorithm and its analysis are presented in Section~\ref{smain},
after introducing the notation and defining some
of the concepts discussed in this paper in Section~\ref{sprel}.

\section{Preliminaries}
\label{sprel}
A trajectory describes the movement of an entity and is
represented as a sequence of vertices in the plane
with timestamps that specify the location of the entity at different
points in time.  The entity is assumed to move from one vertex to
the next in a straight line and with constant speed.
\begin{defn}
\label{dweight}
The weight of a square $r$ with respect to a trajectory $T$, denoted
as $w(r)$ is the total duration in which the entity spends inside it.
Also, $w(u)$ for any sub-trajectory (or edge) $u$ of $T$, indicates
the duration of $u$ (the difference between the timestamps of its
endpoints).
\end{defn}

The input to the problem studied in this paper is a
trajectory $T$ and the value of $s$.
The goal is to find a hotspot of $T$ (Definition \ref{dhotspot}).
Unless explicitly mentioned otherwise, every square discussed in
this paper is axis-aligned and has side length $s$.

\begin{defn}
\label{dhotspot}
A hotspot of trajectory $T$ in $R^2$ is a placement of a
square of side length $s$ in the plane with the maximum
weight (the duration of the presence the entity in the
square is maximised).
\end{defn}

A hotspot of a trajectory with $n$ edges can be found
with the time complexity $O(n^2)$~\cite{gudmundsson13}.
Our goal in this paper is finding approximate hotspots
of a trajectory more efficiently (Definition~\ref{dapprox}).

\begin{defn}
\label{dapprox}
A $c$-size-approximate (or $c$-approximate for brevity) hotspot is
a square whose weight is at least the weight of an exact hotspot
and its side length is $c$ times $s$.
\end{defn}

In Definition~\ref{dapprox}, hotspots are enlarged.
A more natural definition may be squares with the same size
as the exact ones, but with smaller weights (Definition~\ref{ddurapprox}).
\begin{defn}
\label{ddurapprox}
A $c$-duration-approximate hotspot is a square of side length
$s$, whose weight is at least $c$ times the weight of an exact hotspot.
\end{defn}

\section{The Approximation Algorithm}
\label{smain}
For a trajectory $T$ in $R^2$ and some constant $\epsilon$, where
$\epsilon > 0$, in this section we present an algorithm to find a
$(1 + \epsilon)$-approximate hotspot and use it to find a
$1/4$-duration-approximate hotspot.

\begin{figure}
	\centering
	\includegraphics{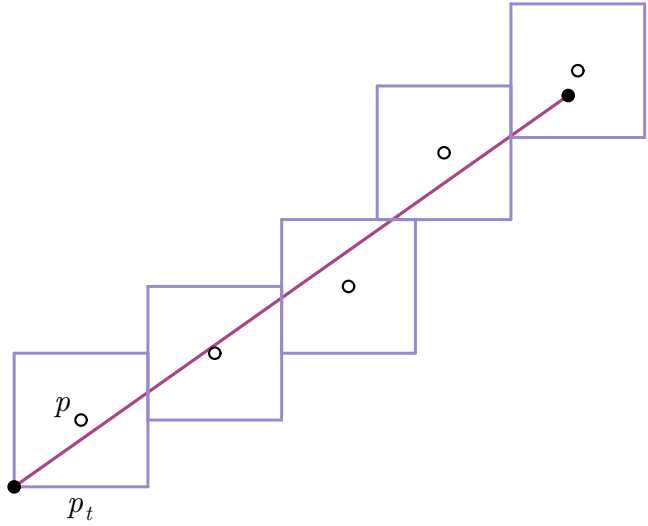}
	\caption{Tiling an edge and adding a point at the centre of each tile}
	\label{ftile}
\end{figure}
We first subdivide each edge of the trajectory
into segments of height and width at most $\epsilon s$; we
do so by covering each edge by non-overlapping
axis-aligned squares of side length $\epsilon s / 2$.
For each such tile, we add a point at its centre.
Let the weight of this point be the duration of the
portion of the edge that is inside its tile.
For every resulting point $p$, we use $p_t$ to denote its
corresponding tile and $p_g$ to denote the corresponding
segment (Figure~\ref{ftile}).
Note that the tiles of different edges may overlap.

\begin{defn}
\label{dpoint}
The point-weight of a square $r$ with respect to a trajectory $T$,
denoted as $w'(r)$, is the total weight of the points inside $r$.
Also, $w'(p)$ for point $p$ denotes the weight of point $p$.
\end{defn}

\begin{lem}
\label{linclusion}
Let $r$ be a square of side length $s$ and
let $r'$ be a square of side length $s + \epsilon s / 2$,
with the same centre of gravity.
We have $w(r) \le w'(r')$.
\end{lem}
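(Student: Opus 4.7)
The plan is to decompose $w(r)$ edge by edge, match each piece of an edge against the tile that contains it, and then show that every tile that contributes must have its centre point lying in $r'$.

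Concretely, I would first rewrite $w(r) = \sum_e \mathrm{duration}(e \cap r)$ where the sum is over the edges $e$ of $T$. For a fixed edge $e$, the tiling of $e$ by axis-aligned squares of side length $\epsilon s / 2$ partitions $e$ into the segments $p_g$, so I can further split
\[
\mathrm{duration}(e \cap r) = \sum_{p \in P_e} \mathrm{duration}(e \cap r \cap p_t),
\]
where $P_e$ denotes the set of centre points produced for $e$. The obvious bound $\mathrm{duration}(e \cap r \cap p_t) \le \mathrm{duration}(p_g) = w'(p)$ is the key inequality, but I can only afford to sum it over those $p$ that lie in $r'$.

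The geometric heart of the argument is therefore the claim: if $e \cap r \cap p_t \ne \emptyset$ then $p \in r'$. To prove this, take any witness point $q \in p_t \cap r$. Since $p$ is the centre of $p_t$, which is an axis-aligned square of side length $\epsilon s / 2$, the $L^\infty$ distance from $q$ to $p$ is at most $\epsilon s / 4$. Because $r'$ is obtained from $r$ (same centre, both axis-aligned) by expanding by $\epsilon s / 4$ on each side, any point within $L^\infty$ distance $\epsilon s / 4$ of a point of $r$ is in $r'$; hence $p \in r'$. This is the step I expect to require the most care, simply to make sure the constants line up: a tile of side $\epsilon s / 2$ gives a half-side of $\epsilon s / 4$, which is exactly the slack between $r$ and $r'$.

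Combining these pieces,
\[
w(r) = \sum_e \sum_{p \in P_e} \mathrm{duration}(e \cap r \cap p_t) \le \sum_e \sum_{p \in P_e \cap r'} w'(p) = \sum_{p \in r'} w'(p) = w'(r'),
\]
where in the last equality I use that $\{P_e\}_e$ collectively enumerates every centre point introduced in the construction, so summing their weights restricted to $r'$ is exactly $w'(r')$. This completes the argument; no step beyond the $L^\infty$ bound above should need any nontrivial calculation.
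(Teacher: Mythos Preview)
Your proposal is correct and follows essentially the same approach as the paper's proof: both identify the set of tile centres $p$ whose tiles $p_t$ are intersected by $r$, bound $w(r)$ by the sum of $w'(p)$ over that set, and then use the $L^\infty$ observation that $r'$ extends $r$ by $\epsilon s/4$ on each side to conclude that every such $p$ lies in $r'$. Your version is more explicit about the edge-by-edge decomposition and the chain of inequalities, but the underlying argument is identical.
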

\begin{proof}
Every point in the sub-trajectory inside $r$ belongs to some
segment $p_g$ corresponding to tile $p_t$ and point $p$.
Let $P$ be the set of all points, whose segments are intersected
by $r$.
Since $r$ contains all or some part of any segment corresponding
to these points, $w(r) \le \sum_{p \in P} w(p)$.
Since $r'$ is $\epsilon s / 4$ longer than $r$ at each side,
when $r$ intersects $p_t$, the centre of $p_t$, $p$, is contained
in $r'$.  Therefore, $\sum_{p \in P} w(p) \le w'(r')$,
which implies that $w(r) \le w'(r')$.
\end{proof}

\begin{lem}
\label{lweight}
Let $r$ be a square of side length $s + \epsilon s / 2$ and
let $r'$ be a square of side length $s + \epsilon s$,
with the same centre of gravity.
We have $w'(r) \le w(r')$.
\end{lem}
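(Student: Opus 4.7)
The plan is to mirror the strategy of Lemma~\ref{linclusion}, but running the inclusion argument in the opposite direction: instead of saying ``whenever $r$ meets a tile, the tile's centre lies in the slightly larger $r'$,'' I want to say ``whenever the centre $p$ of a tile lies in $r$, the entire tile $p_t$ (and hence its segment $p_g$) lies in the slightly larger $r'$.'' Once this geometric containment is established, comparing point-weights to trajectory weight will be routine.

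First I would let $P$ denote the set of points $p$ (as produced by the tiling construction) that lie inside $r$, so that by definition $w'(r)=\sum_{p\in P} w'(p)$. For each such $p$, the tile $p_t$ is an axis-aligned square of side length $\epsilon s/2$ centred at $p$, so $p_t$ is contained in the axis-aligned square centred at the centre of $r$ whose side length is $(s+\epsilon s/2)+2\cdot (\epsilon s/4)=s+\epsilon s$; this square is exactly $r'$. Thus $p_t\subseteq r'$, and consequently the sub-segment $p_g$ of the trajectory lying in $p_t$ is contained in $r'$.

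Next I would use the definition of $w'(p)$: it is the duration of the trajectory portion $p_g$. Since $p_g\subseteq r'$, every unit of time counted in $w'(p)$ also contributes to $w(r')$. The sub-segments $\{p_g : p\in P\}$ are pairwise disjoint sub-paths of the trajectory (each comes from a distinct tile, and within a single edge the tiles are non-overlapping by construction, while sub-segments from different edges are disjoint as sub-trajectories). Summing over $p\in P$ therefore yields $w(r')\ge \sum_{p\in P} w'(p)=w'(r)$, which is the desired inequality.

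The only real obstacle is the geometric bookkeeping in the containment step: I need to verify that the slack of $\epsilon s/4$ per side between $r$ and $r'$ is enough to absorb a tile of half-side $\epsilon s/4$ whose centre can sit anywhere in $r$. This is a one-line computation, but it is the whole point of choosing $r'$ to be $\epsilon s/2$ larger than $r$. Everything else is packaging: the identification of $w'(p)$ with the duration of $p_g$ comes straight from Definition~\ref{dpoint} and the construction preceding it, and the final inequality is just additivity of duration over disjoint sub-paths.
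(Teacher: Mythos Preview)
Your proposal is correct and follows essentially the same approach as the paper: define $P$ as the points in $r$, use the $\epsilon s/4$ slack per side to show each tile $p_t$ (hence each segment $p_g$) lies in $r'$, and then sum the durations. You are in fact slightly more careful than the paper, since you make explicit the disjointness of the $p_g$'s that justifies the final inequality $\sum_{p\in P} w(p_g)\le w(r')$.
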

\begin{proof}
Let $P$ denote the set of points inside $r$; the point-weight
of square $r$ is the sum of the weights of the points in $P$.
Suppose $p$ is a member of $P$.
Since $p$ is the centre of $p_t$ and $p$ is inside $r$,
the whole of $p_t$ is contained in $r'$, because $r'$ is
$\epsilon s/4$ longer than $r$ at each side.
This implies that the whole of $p_g$ is inside $r'$.
Therefore, $\sum_{p \in P} w'(p) = \sum_{p \in P} w(p_g) \le w(r')$,
as required.
\end{proof}

In Theorem~\ref{tsweep}, we use a data structure for storing $m$
numbers that supports obtaining their maximum in $O(1)$ and
increasing the numbers in any continguous interval of the numbers
by a value in $O(\log m)$.
This can be implemented by augmenting Range Minimum Query (RMQ)
data structures, like the Fenwick tree \cite{fenwick96}, and
by storing changes to the leaves of subtrees at internal nodes,
instead of updating the value of every node in that subtree.

\begin{thm}
\label{tsweep}
Given a trajectory $T$ in $R^2$ and the value of $s$,
after tiling, a square of side length $s + \epsilon s / 2$ and
with the maximum point-weight can be found with the time
complexity $O(m \log m)$, where $m$ is the number of points.
\end{thm}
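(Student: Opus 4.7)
My plan is a left-to-right plane sweep using the range-update / maximum-query data structure introduced just before the theorem. Write $s' = s + \epsilon s / 2$ and let $P$ be the set of $m$ weighted tile centres. I first observe that, for a square of side $s'$ whose right edge has any fixed $x$-coordinate, the point-weight viewed as a function of the top-edge $y$-coordinate $Y$ is a step function whose upward jumps occur at $Y = y_p$ and whose downward jumps occur at $Y = y_p + s'$. Its maximum is never attained immediately after a downward jump, so it suffices to test $Y \in \{y_p : p \in P\}$; a symmetric argument in $x$ lets me read off the answer from queries performed only at the sweep events.

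The sweep moves the right edge of a width-$s'$ vertical strip through the $O(m)$ events $x = x_p$ (entry of $p$) and $x = x_p + s'$ (exit of $p$), in sorted order. Between consecutive events the set of points in the strip is constant. I maintain in the data structure a vector of length $m$, indexed by the $y$-sorted order of $P$, whose entry at position $q$ gives the point-weight of the square with top edge at $y_q$ and right edge at the current sweep position. An entry event for $p$ adds $w'(p)$ to every entry indexed by $q \in P$ with $y_p \le y_q \le y_p + s'$; these indices form a contiguous range, so the update costs $O(\log m)$. An exit event performs the symmetric subtraction. After every event I query the global maximum in $O(1)$ and track the best seen value together with its placement. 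Sorting costs $O(m \log m)$ and the $O(m)$ update/query pairs cost another $O(m \log m)$, which gives the claimed bound.

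The step that will require the most care is boundary bookkeeping: one must fix a consistent convention (for instance half-open squares) so that the step-function discretisation matches Definition~\ref{dpoint} exactly, and handle ties when several points share an $x$- or $y$-coordinate so that every candidate placement is actually represented by some post-event state of the data structure. Once those conventions are in place, correctness and the running time follow directly from the data structure's interface.
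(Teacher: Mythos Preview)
Your proposal is correct and follows essentially the same approach as the paper: a sweep with two parallel lines at distance $s' = s + \epsilon s/2$, maintaining a range-update/max-query structure indexed by the $y$-sorted points, with add/subtract events when a point enters or leaves the strip. The only cosmetic differences are that the paper anchors candidate squares by their lower and left sides (moving the optimum up and right) while you anchor by the top and right edges, and that the paper uses binary search on the $y$-sorted sequence to locate the update interval rather than your phrasing in terms of step-function breakpoints; both yield the same $O(m\log m)$ bound.
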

\begin{proof}
To find a square with the maximum weight, it suffices to search
among the squares that have a point on each of their
lower and left sides (any square with the maximum weight can
be moved up and right without changing its point-weight, until
their lower and left sides meet a point).

Let $\sigma$ be the sequence of points in $P$,
ordered by their $y$-coordinate.
We sweep the plane horizontally using two parallel sweep lines with
distance $s + \epsilon s / 2$ as follows.
During the sweep line algorithm, we maintain the point-weight of $m$
squares in the data structure $W$, such that the $i$-th number in
$W$ denotes the point-weight of the square
whose lower side has the same height as the $i$-th point and
its left and right sides are on the sweep lines; we use $r_i$
to refer to this square.

In the sweep line algorithm, we process the following events:
when the left or the right sweep line intersects a point $p$.
We process an event for point $p$ as follows.
Let $p$ be the $i$-th item of $\sigma$ and
let $j$ be the index of the lowest point in $\sigma$ such that
the difference between the height of $p$ and the $j$-th
point of $\sigma$ is at most $s + \epsilon s / 2$;
the value of $j$ can be found using binary search on $\sigma$.
When $p$ meets the right sweep line, we increase the
weight of every square $r_k$ such that $j \le k \le i$
by $w'(p)$, because every such square contains $p$.
Similarly, when $p$ meets the left sweep line, we decrease the
weight of every square $r_k$ such that $j \le k \le i$
by $w'(p)$.
During the sweep line algorithm, we record the square with
the maximum weight in $W$.  At the end of the algorithm, it
denotes a square with the maximum point-weight among all
squares with a vertex on their lower and left sides.

The complexity of sorting $m$ points based on their $y$-coordinate
and handling $m$ events, each with complexity $O(\log m)$ is
$O(m \log m)$.
\end{proof}

In Theorem~\ref{tmain}, we present and analyse the main algorithm.

\begin{thm}
\label{tmain}
Given a trajectory $T$ in $R^2$ and the value of $s$,
there is an algorithm that finds a $(1 + \epsilon)$-approximate
hotspot of trajectory $T$ with the time complexity
$O({n\phi \over \epsilon} \log {n\phi \over \epsilon})$,
in which $\phi$ is the ratio of average length of trajectory edges to $s$.
\end{thm}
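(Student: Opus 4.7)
The plan is to combine Lemmas~\ref{linclusion} and~\ref{lweight} with the sweep of Theorem~\ref{tsweep}. First I would tile every edge of $T$ by axis-aligned squares of side $\epsilon s/2$, placing a weighted point at each tile centre as described before Definition~\ref{dpoint}. Then I would invoke Theorem~\ref{tsweep} to obtain, in $O(m\log m)$ time, a square $q$ of side $s + \epsilon s/2$ whose point-weight is maximum among all such squares, where $m$ is the total number of weighted points produced by the tiling. The algorithm would finally return the square $r^\star$ of side $s + \epsilon s = (1+\epsilon)s$ concentric with $q$.

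For correctness, let $h$ be an exact hotspot and $h'$ the square of side $s + \epsilon s/2$ sharing its centre. Lemma~\ref{linclusion} gives $w(h) \le w'(h')$; the maximality of $q$ among squares of side $s + \epsilon s/2$ gives $w'(h') \le w'(q)$; and Lemma~\ref{lweight} gives $w'(q) \le w(r^\star)$. Chaining these three inequalities yields $w(r^\star) \ge w(h)$, so $r^\star$, whose side length is exactly $(1+\epsilon)s$, satisfies Definition~\ref{dapprox}.

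For the running time, the only non-routine step is bounding $m$. An edge of length $\ell_i$ is covered by $O(\ell_i/(\epsilon s))$ axis-aligned tiles of side $\epsilon s/2$, since each such tile has diameter $O(\epsilon s)$ and hence the count is independent of the edge's orientation up to constants. Summing over the $n$ edges and using $\phi = \bar{\ell}/s$ with $\bar{\ell} = \frac{1}{n}\sum_i \ell_i$, I get $m = O(n\phi/\epsilon)$, and substituting this into the $O(m\log m)$ bound of Theorem~\ref{tsweep} yields the claimed complexity.

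The main obstacle, such as it is, will be verifying that the two side-length slacks of $\epsilon s/2$ in Lemmas~\ref{linclusion} and~\ref{lweight} compose exactly to $\epsilon s$, so the returned square has side $(1+\epsilon)s$ rather than something larger, and confirming that the per-edge tile count is indeed $O(\ell_i/(\epsilon s))$ with the hidden constant absorbed into the $\phi$-dependent bound; everything else is bookkeeping on the three previously established results.
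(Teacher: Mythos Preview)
Your proposal is correct and follows essentially the same route as the paper: tile, invoke Theorem~\ref{tsweep} to find a maximum point-weight square of side $s+\epsilon s/2$, enlarge it concentrically to side $(1+\epsilon)s$, and justify via the chain $w(h)\le w'(h')\le w'(q)\le w(r^\star)$ using Lemmas~\ref{linclusion} and~\ref{lweight}. The tile-count bound and the resulting $O\!\left(\frac{n\phi}{\epsilon}\log\frac{n\phi}{\epsilon}\right)$ complexity match the paper's analysis (the paper writes the per-edge count as $\lceil d/(\epsilon s)\rceil$ and sums to $a/(\epsilon s)+n$, which is your $O(n\phi/\epsilon)$).
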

\begin{proof}
After tiling, as described in the beginning of this section,
an edge of length $d$ is subdivided into at most
$\lceil {d \over {\epsilon s}} \rceil$ segments.
Therefore, if the total length of the edges of $T$ is $a$,
the number of resulting segments is at most ${a \over {\epsilon s}} + n$,
which is equal to $O({n\phi \over \epsilon})$ asymptotically.
Theorem~\ref{tsweep} shows how the square with the maximum
point-weight, $r$, can be found in $O(m \log m)$.
Let $r'$ be the square with the same centre of gravity as $r$ but
of side length $s + \epsilon s$.  Also, let $h$ denote the weight
of an exact hotspot of $T$.
We show that $r'$ is a $(1 + \epsilon)$-approximate hotspot.

Lemma~\ref{linclusion} implies that there is at least one square with
side length $s + \epsilon s / 2$ whose point-weight is equal to
$h$, the weight of an exact hotspot of $T$ (of side length $s$).
Since, $r$ is the square with the maximum point-weight among
squares of side length $s + \epsilon s / 2$, its point-weight
is at least $h$.
Furthermore, Lemma~\ref{lweight} shows that the weight of $r'$ is at
least the point-weight of $r$ (at least $h$).
Therefore, the algorithm finds a square of side length
$s + \epsilon s$ and with weight at least $h$;
a $(1 + \epsilon)$-approximate hotspot by Definition~\ref{dapprox}.
\end{proof}

We now use the algorithm presented in Theorem~\ref{tmain} to
find a duration-approximate hotspot of a trajectory in the
plane.  For that, we need Observation~\ref{ocorners}, which
can be shown by placing the smaller squares at the corners
of a hotspot.

\begin{obs}
\label{ocorners}
Let $h$ be the weight of an exact hotspot of a trajectory $T$
in the plane.  There exists a square of side length $cs$ and
weight at least $h / 4$, provided that $c \ge 1/2$.
\end{obs}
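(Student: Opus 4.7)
The plan is to establish the observation at the boundary case $c = 1/2$ by a pigeonhole argument, and then to extend it to every $c \ge 1/2$ by a trivial containment. First I would fix an exact hotspot $H$ of $T$: an axis-aligned square of side length $s$ with $w(H) = h$. I would then subdivide $H$ into four congruent axis-aligned sub-squares of side length $s/2$, one anchored at each corner of $H$. These four squares tile $H$, overlapping only on a one-dimensional boundary that contributes nothing to $w$ (the entity traverses any finite union of line segments in zero time, since it moves with constant positive speed along each trajectory edge). Consequently $w(H)$ equals the sum of the weights of the four sub-squares, and by the pigeonhole principle at least one of them, say $H_0$, satisfies $w(H_0) \ge h/4$.

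For a general $c \ge 1/2$, I would then place an axis-aligned square $r$ of side length $cs$ so that $r$ contains $H_0$; this is possible precisely because $cs \ge s/2$ equals the side length of $H_0$. Weight is monotone under set inclusion, since any portion of the trajectory inside $H_0$ is also inside $r$, so $w(r) \ge w(H_0) \ge h/4$, which is exactly what Observation~\ref{ocorners} claims. The hint given in the paper, ``placing the smaller squares at the corners of a hotspot'', is carried out by the partition step above.

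I do not expect a genuine obstacle in this argument; the single technical point worth being careful about is the boundary issue when summing weights over the four sub-squares. The shared boundary is a finite union of axis-aligned segments, and each trajectory edge meets it either at isolated points (which carry zero duration) or, in the degenerate case, along a whole sub-segment; in the latter case that sub-segment can be assigned by convention to exactly one of the two adjacent sub-squares, so the identity $w(H) = \sum_{i=1}^{4} w(H_i)$ still holds and the bound is preserved. No further estimates, no approximation of $s$, and no interaction with the tiling construction from Section~\ref{smain} are needed.
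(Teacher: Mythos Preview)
Your argument is correct and matches the paper's own justification, which consists solely of the hint you quote about placing the smaller squares at the corners of a hotspot. One small simplification: the pigeonhole step only needs the inequality $\sum_{i=1}^{4} w(H_i) \ge w(H)$, which holds regardless of how the shared boundary is counted, so the boundary discussion can be dropped entirely.
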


\begin{cor}
\label{cmain}
Given a trajectory $T$ in $R^2$ and the value of $s$,
there is an algorithm that finds a $1/4$-duration-approximate
hotspot of trajectory $T$ with the time complexity $O({n\phi} \log {n\phi})$,
in which $\phi$ is the ratio of average length of trajectory edges to $s$.
\end{cor}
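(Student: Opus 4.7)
The plan is to reduce the duration-approximation problem at scale $s$ to the size-approximation problem at scale $s/2$, and invoke Theorem~\ref{tmain} with a rescaled input. Specifically, I would apply Theorem~\ref{tmain} to the trajectory $T$ with the square side length set to $s' = s/2$ and with the approximation parameter $\epsilon = 1$. The algorithm then returns a $(1+1)$-size-approximate hotspot at scale $s'$, that is, a square of side length $2 s' = s$ whose weight is at least $h'$, where $h'$ is the weight of an exact hotspot at scale $s/2$.

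Next, I would use Observation~\ref{ocorners} applied with $c = 1/2$ to conclude that there exists some square of side length $s/2$ whose weight is at least $h/4$, where $h$ is the weight of an exact hotspot at scale $s$. Since $h'$ is the \emph{maximum} weight over all squares of side length $s/2$, this gives $h' \ge h/4$. Combining this with the previous paragraph, the square of side length $s$ returned by the rescaled invocation of Theorem~\ref{tmain} has weight at least $h/4$, which satisfies Definition~\ref{ddurapprox} with $c = 1/4$.

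Finally, I would analyse the time complexity. When the algorithm of Theorem~\ref{tmain} is applied with side length $s/2$ in place of $s$, the corresponding ratio of average edge length to the rescaled side length is $\phi' = 2\phi$. Substituting $\phi \mapsto 2\phi$ and $\epsilon = 1$ into the bound $O((n\phi/\epsilon)\log(n\phi/\epsilon))$ yields $O(n\phi \log n\phi)$, as claimed.

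The main obstacle is conceptual rather than technical: one must recognise that the definitional asymmetry between size-approximation (Definition~\ref{dapprox}) and duration-approximation (Definition~\ref{ddurapprox}) can be bridged by running the size-approximation algorithm at half the scale with $\epsilon = 1$, so that the enlarged output square has exactly the target side length $s$, while Observation~\ref{ocorners} guarantees that enough weight survives at the smaller scale. Once this reduction is in place, the remaining steps are purely a substitution into Theorem~\ref{tmain}.
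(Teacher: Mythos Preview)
Your proposal is correct and follows essentially the same approach as the paper: apply Theorem~\ref{tmain} at scale $s'=s/2$ with $\epsilon=1$ to obtain a side-$s$ square of weight at least the optimum $h'$ at scale $s/2$, then invoke Observation~\ref{ocorners} to bound $h' \ge h/4$. Your exposition is in fact slightly more explicit than the paper's, particularly in spelling out the substitution $\phi' = 2\phi$ for the running-time bound.
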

\begin{proof}
Theorem~\ref{tmain} for hotspot side length $s' = s / 2$ and $\epsilon = 1$
yields a square $r$ of weight $h$ and side length $s$.
Since $h$ is the maximum weight of the squares with side length $s/2$,
Observation~\ref{ocorners} implies that the weight of an exact hotspot
of side length $s$ of $T$ cannot be greater than $4h$.
Therefore, $r$ is a $1/4$-duration-approximate hotspot of $T$.
\end{proof}


\end{document}